\documentclass{elektr}
\usepackage{hyperref}
\hypersetup{
colorlinks=true,
urlcolor=blue,
citecolor=blue}
\usepackage[all]{xy,xypic}
\usepackage{amsfonts,amssymb,amsmath,amsgen,amsopn,amsbsy,theorem,graphicx,epsfig}
\usepackage{eufrak,amscd,bezier,latexsym,mathrsfs,enumerate}\usepackage[utf8]{inputenc}\usepackage[english]{babel}
\usepackage[dvipsnames]{xcolor}
\usepackage[pagewise]{lineno}

\year{}
\vol{}
\fpage{}
\lpage{}
\doi{}

\newcommand{\bfc}{\mbox{\boldmath $c$}}

\newcommand{\bfx}{\mbox{\boldmath $x$}}
\newcommand{\bfy}{\mbox{\boldmath $y$}}
\newcommand{\bfp}{\mbox{\boldmath $p$}}

\title{Improving the Redundancy of Knuth's Balancing Scheme for Packet Transmission Systems}

\author[Elie Ngomseu Mambou,
Ebenezer Esenogho,~and~Hendrik C. Ferreira]{
\textbf{Elie Ngomseu Mambou\thanks{emambou@uj.ac.za}~, Ebenezer Esenogho, Hendrik C. Ferreira}\\
Department of Electrical and Electronic Engineering\\
University of Johannesburg\\ P. O. BOX 524, Auckland Park, 2006, South Africa
\\ [1.8em]

\rec{.201}
\acc{.201}
\finv{..201}
}

\def\E{\ifmmode{\mathbb E}\else{$\mathbb E$}\fi} %natural numbers
\def\N{\ifmmode{\mathbb N}\else{$\mathbb N$}\fi} %natural numbers
\def\R{\ifmmode{\mathbb R}\else{$\mathbb R$}\fi} %real numbers
\def\Q{\ifmmode{\mathbb Q}\else{$\mathbb Q$}\fi} %rational numbers
\def\C{\ifmmode{\mathbb C}\else{$\mathbb C$}\fi} %complex numbers
\def\H{\ifmmode{\mathbb H}\else{$\mathbb H$}\fi} %complex numbers
\def\Z{\ifmmode{\mathbb Z}\else{$\mathbb Z$}\fi} %integers
\def\P{\ifmmode{\mathbb P}\else{$\mathbb P$}\fi} %real numbers
\def\T{\ifmmode{\mathbb T}\else{$\mathbb T$}\fi} %real numbers
\def\SS{\ifmmode{\mathbb S}\else{$\mathbb S$}\fi} %real numbers
\def\DD{\ifmmode{\mathbb D}\else{$\mathbb D$}\fi} %real numbers

\newcommand{\bse}{\begin{subequations}}
\newcommand{\ese}{\end{subequations}}
\newcommand{\ben}{\begin{enumerate}}
\newcommand{\een}{\end{enumerate}}
\newcommand{\bens}{\begin{enumerate*}}
\newcommand{\eens}{\end{enumerate*}}
\newcommand{\be}{\begin{equation}}
\newcommand{\ee}{\end{equation}}
\newcommand{\bea}{\begin{eqnarray}}
\newcommand{\eea}{\end{eqnarray}}
\newcommand{\baa}{\begin{eqnarray*}}
\newcommand{\eaa}{\end{eqnarray*}}
\newcommand{\bc}{\begin{center}}
\newcommand{\ec}{\end{center}}

\newtheorem{theorem}{Theorem}%[section]

\theoremstyle{corollary}

\theoremstyle{lemma}
\newtheorem{lemma}{Lemma}

\theoremstyle{proposition}

\theoremstyle{axiom}

\theoremstyle{conjecture}

\theoremstyle{example}
\newtheorem{example}{Example}

\theoremstyle{definition}
%[section]

\theoremstyle{remark}
%[section]

%[section]

\setcounter{page}{1}
\begin{document}

\maketitle

\begin{abstract}A simple scheme was proposed by Knuth to generate binary balanced codewords from any information word. However, this method is limited in the sense that its redundancy is twice as that of the full sets of balanced codes. The gap between Knuth’s algorithm redundancy and that of the full sets of balanced codes is significantly considerable. This paper attempts to reduce that gap. Furthermore, many constructions assume that a full balancing can be performed without showing the steps. A full balancing refers to the overall balancing of the encoded information together with the prefix. We propose an efficient way to perform full balancing scheme which do not make use of lookup tables or enumerative coding. 

\keywords{Balanced codes, binary word, parallel decoding, prefix coding, full balancing}
\end{abstract}

\section{Introduction}
Balanced codes have been widely studied over the years because of their applicability in the field of communication and storage structures such as optical and magnetic recording devices like Blu-Ray, DVD and CD \cite{immink1986,leiss1984}; error correction and detection \cite{bassam1993, weber2011}; cable transmission \cite{cattermole1983} and noise attenuation in VLSI integrated circuits \cite{tabor1990}. For some balancing techniques, the decoding of balanced codes is fast and can be done in parallel which avoids latency in communication.

A binary word of length k, with k even, is said to be balanced if the number of zeros and ones equals $k/2$. Knuth proposed a simple and efficient scheme to generate balanced codewords \cite{knuth1986}. This approach stipulates that any binary word $\bfx$ of length $k$ can always be encoded into a balanced one denoted as $\bfy$ by inverting the first $e$ bits of $\bfx$ where $1 \leq e \leq k$. The index $e$ is encoded as the prefix $\bfp$ which is prepended to y and send through a channel. At the receiver, the decoder receives the codeword $\bfp\bfy$ and retrieves the original information word through the prefix by inverting back the first $e$ bits of $\bfy$. This algorithm is very suitable for long sequences as it does not make use of any lookup tables neither at the encoder nor at the decoder. A detailed explanation of this method is covered in Section \ref{sec2}.

A lot of works have been done to reduce the redundancy generated by Knuth’s algorithm (KA). In \cite{weber2010}, two attempts were described by Weber and Immink; the first one was using the distribution of the prefix index. This consists of setting the encoder to choose smaller values for the prefix index knowing that the position index e might not be unique. By default, KA makes use of the first balancing index while inverting from least position bits. It has been shown that the distribution of that index for equiprobable information words is not uniform and presents a redundancy which is slightly less than that of KA. The second attempt used the multiplicity of balancing points within a word to transmit auxiliary data. The previous schemes provide a fixed length (FL) and variable length (VL) prefix implementation. However, these methods only made a minor improvement on KA.

The second attempt from \cite{weber2010} was exploited in \cite{al2013} and renamed as bit recycling for Knuth’s algorithm (BRKA); it relies on a high probability of having more than one balancing index while performing KA; in other words, this scheme uses the multiplicity of balancing indexes to encode a shorter prefix than that from KA. In \cite{dube2017}, a technique for balancing words was presented based on permutations, the arcade game Pacman and limited-precision integers; the redundancy of KA was improved and the redundancy of the full set of balanced codes was nearly achieved at the cost of high complexity and large memory usage.

A major contribution was brought by Immink and Weber \cite{immink2010} through an efficient encoding of the index prefix for both VL and FL schemes. This scheme is based on associating distinctly each word of a code to a balanced codeword. More details on this method will be provided. Furthermore, the distribution of the prefix length was discussed as well as the algorithm complexity.

In this paper, a modification of a scheme described in \cite{immink2010} is proposed to generate efficient and less-redundant balanced codes compared to most state-of-the-art techniques. This approach is designed for communication systems that model the data as packets contrarily to cascade-based model. The rest of this paper is structured as follows: A background study is done in Section \ref{sec2}; the system model of the proposed scheme is described in Section \ref{sec3}; then in Section \ref{sec4}, the proposed encoding is presented. Section \ref{sec5} and \ref{sec6} provide detailed analysis as well as performance and discussions of the proposed scheme redundancy. Finally, the paper is concluded in Section \ref{sec7}.

\section{Background}\label{sec2}
Let $\bfx=(x_1x_2\dots x_k)$ be a bipolar sequence of length $k$ and $\bfp=(p_1p_2\dots p_{r})$, the prefix of length $r$.  $\bfc=(c_1c_2\dots c_{n})$ of length $n=k+r$ is the transmitted codeword comprised of the encoding of $\bfx$ denoted as $\bfy$ and appended to $\bfp$, $\bfc=(\bfp\cdot\bfy)$. All these words are defined within the alphabet $\mathcal{A}^2$ where $\mathcal{A}^2= \{-1,1\}$. Let $d(\bfx)$ refer to the sum of all digits in $\bfx$, also called the disparity of $\bfx$. The word $\bfx$ is said to be balanced if 
$d(\bfx)=\sum_{i=1}^{k}x_i=0.$

Similarly, the disparity of the first $j$ bits of $\bfx$, also called running digital sum (RDS), is denoted as $d_j(\bfx)$; and $d_j(\bfx)=\sum_{i=1}^{j}x_i$, where $1\leq j\leq k$. For the scope of this paper, the information word length is considered to be even.

\subsection{Knuth's balancing scheme}\label{2a}
The celebrated Knuth's scheme consists of complementing a word bits up to certain point. This is equivalent to splitting a word into two segments, the first one has its bits flipped and the second is unchanged. It was shown in \cite{knuth1986} that this simple and efficient procedure will always generate at least one balanced codeword. If $e$ is the index of the first balancing point then, the disparity of $\bfx$ is given by:
\begin{equation}
d(\bfx)=-\sum_{i=1}^{e}x_i + \sum_{i=e+1}^{k}x_i.
\end{equation} 

Those summations reflect the two segments that build a balanced codeword.  Because $d_{j+1}(\bfx)=d_j(\bfx)\pm 2$, it is always achievable to find an index $e$ corresponding to a balancing point such that $d(\bfx)=0$. The index $e$ might be unique, by convention, the Knuth's algorithm only considers the first one while inverting from least index bits. In parallel scheme, the index $e$ is encoded as the prefix and prepended to $\bfy$. The length of the prefix, $r$, is given by:
\begin{equation}
r=\lceil\log_2k\rceil, \mbox{ for }k\gg 1.
\end{equation}

The redundancy of a full set of balanced codewords of length $k$, denoted as $H_0(k)$ equals
\begin{equation}\label{h0}
H_0(k)=k-\log_2{k\choose k/2},
\end{equation}

An approximation of $H_0(k)$ was given in \cite{knuth1986} as
\begin{equation}\label{eqh0}
H_0(k)\approx \frac{1}{2}\log_2k+0.326, \mbox{ for }k\gg 1.
\end{equation}

For large $k$, the Knuth's scheme redundancy is almost twice as large as $H_0(k)$. 
\subsection{Efficient binary balanced codewords}\label{2b}

Let $\bfx^j$ be the word $\bfx$ where first $j$ bits are inverted. If $e$ represents the index of the first balancing point then $\bfy=\bfx^e$ is the balanced codeword through Knuth's scheme. There are $k$ different ways of inverting the word $\bfx$. In \cite{immink2010}, it was established that some words from the set ${\bfx^1, \bfx^2,\dots,\bfx^k}$ can be associated with the balanced word $\bfy$ following the Knuth's scheme. Let $s(\bfy)$ be the set of all words associated with a balanced codeword, $\bfy$, $s(\bfy)=\{\bfx: \bfx^j=\bfy \mbox{ with }1\leq j\leq k\}$; and $|s(\bfy)|$, its cardinality.

The prefix of the encoded word corresponds to the information word rank within the subset $s(\bfy)$.  It was shown in \cite{immink2010} that the size of $s(\bfy)$ is such that: $2\leq |s(\bfy)| \leq \frac{k}{2}+1$, where $|s(\bfy)|=\mbox{max}\{d_j(\bfx)\}-\mbox{min}\{d_j(\bfx)\}+1$ with $\mbox{max}\{d_j(\bfx)\}$ and $\mbox{min}\{d_j(\bfx)\}$ being the maximum and minimum RDS values of $\bfx$ respectively. For the fixed length (FL) scheme, the prefix has exactly $\log_2\big(\frac{k}{2}+1\big)$ bits while in the variable length (VL) scheme, the prefix length varies between 1 and $\log_2\big(\frac{k}{2}+1\big)$ bits.

\section{System Model}\label{sec3}
Fig. \ref{fig:sys} shows a model of communication for two different systems. In Fig.~\ref{fig:sys}(a), the data is received as a set of balanced codewords; in this model, the decoder must keep track of start and ending of each data block for the purpose of synchronization that relies on prefixes. 
\begin{figure}[h!]
	\centering
	\includegraphics[width=.9\linewidth]{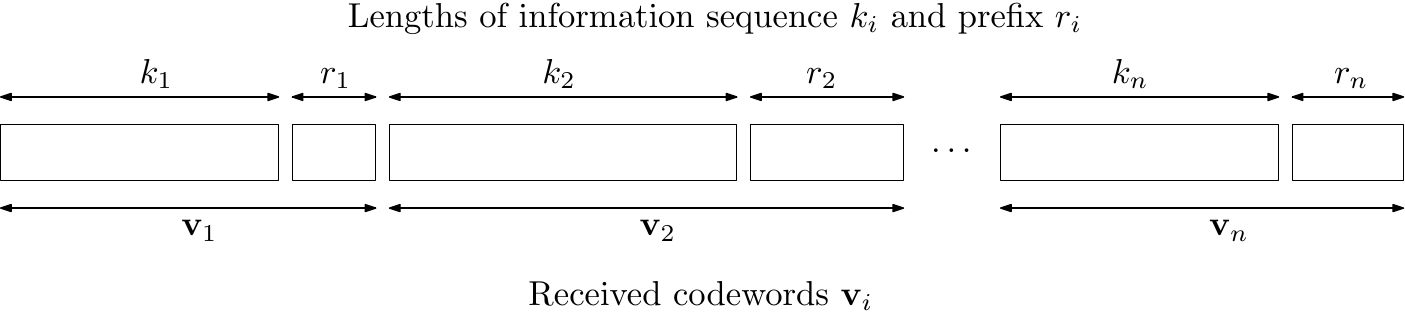}	\\
	(a)		\\
	\includegraphics[width=0.35\linewidth]{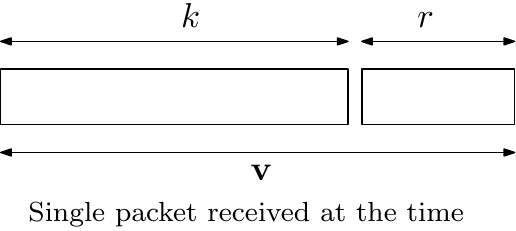}	\\
	(b)		
	\caption{System model (a) Cascade-based   vs. (b) Packet-based}
	\label{fig:sys}
\end{figure}

Whereas, in Fig.~\ref{fig:sys}(b), the packet conception represents a single data block which is received once at the time. This concept is used in  various communications systems such as Bluetooth/wireless communication, smart grids systems, GSM networks, power line communication (PLC), visible light communication (VLC), network communication, etc. The communication is incremental as packets are transmitted once at the time and an ``ACK" message is required before a subsequent packet is sent; in this case, the decoder only keeps track on the end of packet (EOP). In other words, the packet-based communication model can be viewed as a ternary alphabet, $\{0,1,\epsilon\}$, where $\epsilon$ represents the EOP marker whereas the cascade-based model uses binary communication. We exploit the advantage of the ternary alphabet to derive an efficient and less redundant scheme for generating balanced codewords. For the rest of this work, the packet-based model is considered.

\section{Encoding scheme}\label{sec4}
This consists of associating every information sequence of length $k$ with a balanced codeword using the Knuth's inversion rule as described in \cite{immink2010}. This leads to ${k\choose \frac{k}{2}}$ distinct sets. Furthermore, each of these sets is compressed to decrease the redundancy based on Lemma \ref{th2} .  

Let $\bfx$ be the information sequence to be encoded, if $\bfx$ is already balanced, a protocol is adopted between the transmitter and the receiver to have a prefix-less codeword; otherwise, $\bfx$ is balanced following the Knuth's algorithm, associated with the corresponding balanced codeword $\bfy$. Following the same procedure, all information sequences can be associated with balanced codewords, $\bfy$ and listed according to the lexicographic order within subsets $s(\bfy)$. The prefix of each $\bfx$ corresponds to its rank within $s(\bfy)$.
\begin{lemma}\label{th2}
	Any balanced codeword is always associated with another balanced one. 
\end{lemma}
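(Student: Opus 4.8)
The plan is constructive: given an arbitrary balanced codeword $\bfy$, I will produce a balanced word $\bfx\neq\bfy$ that lies in $s(\bfy)$, i.e.\ a balanced information word whose Knuth encoding is $\bfy$. The candidate is $\bfy$ with its initial segment complemented up to the first position at which the running digital sum of $\bfy$ returns to zero.

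The single computational tool I need is the effect of the inversion operator on the RDS. If $\bfx=\bfy^{e}$ (the first $e$ symbols of $\bfy$ complemented), then evaluating $d_j(\bfx)=\sum_{i\le j}x_i$ directly gives $d_j(\bfx)=-d_j(\bfy)$ for $1\le j\le e$ and $d_j(\bfx)=d_j(\bfy)-2d_e(\bfy)$ for $e<j\le k$; in particular $d(\bfx)=d(\bfy)-2d_e(\bfy)$, and applying the operator twice is the identity, $(\bfy^{e})^{e}=\bfy$.

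Then I would argue as follows. Since $\bfy$ is balanced, $d_k(\bfy)=d(\bfy)=0$, so $\{\,i\in\{1,\dots,k\}:d_i(\bfy)=0\,\}\neq\varnothing$; let $e$ be its least element and put $\bfx=\bfy^{e}$. Three verifications complete the argument: (i) $\bfx\neq\bfy$, because $e\ge 1$; (ii) $\bfx$ is balanced, because $d(\bfx)=d(\bfy)-2d_e(\bfy)=0$; and (iii) $\bfx\in s(\bfy)$, which is the heart of the matter. For (iii), a balancing index of $\bfx$ is an index $i$ with $d_i(\bfx)=d(\bfx)/2=0$; for $1\le i<e$ one has $d_i(\bfx)=-d_i(\bfy)\neq 0$ by minimality of $e$, while $d_e(\bfx)=-d_e(\bfy)=0$, so the \emph{first} balancing index of $\bfx$ is exactly $e$. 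Hence Knuth's rule sends $\bfx$ to $\bfx^{e}=(\bfy^{e})^{e}=\bfy$, i.e.\ $\bfx\in s(\bfy)$, and the lemma follows.

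The only delicate point is (iii): one must check that Knuth's convention (smallest balancing index, scanning from the low-index end) selects precisely $e$ and not an earlier index --- this is where minimality of $e$ enters --- and that the double inversion collapses to the identity; both are immediate from the RDS identities above, so I do not expect a real obstacle. As a remark, it is worth observing that this $\bfx$ is the unique balanced member of $s(\bfy)$: by the description of $s(\bfy)$ recalled in the previous section, the other members correspond to nonzero RDS levels $v$ and have disparity $-2v\neq 0$. This uniqueness is what makes the subsequent compression of each set $s(\bfy)$ in the encoding scheme well defined.
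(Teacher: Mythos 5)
Your proof is correct, but it reaches the lemma from the opposite end to the paper. The paper reads the statement in the forward direction: if the information word $\bfx$ is already balanced, Knuth's procedure still finds a balancing index (in the worst case $e=k$, since inverting all bits preserves balance), so $\bfx$ is sent to a balanced codeword differing from it in at least the first bit; its proof is essentially that one-sentence observation. You instead fix the balanced codeword $\bfy$ and exhibit a balanced preimage, $\bfx=\bfy^{e}$ with $e$ the first return to zero of the RDS of $\bfy$, and use the identities $d_j(\bfy^{e})=-d_j(\bfy)$ for $j\le e$ and $d_j(\bfy^{e})=d_j(\bfy)-2d_e(\bfy)$ for $j>e$ to check that $e$ is precisely the \emph{first} balancing index of $\bfx$, so that Knuth's rule maps $\bfx$ back to $\bfy$. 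Both readings of the (ambiguously phrased) statement are true, and both are in fact needed for the compression of the sets $s(\bfy)$. Your route buys strictly more: together with your closing uniqueness observation (the balanced member of $s(\bfy)$ corresponds to the RDS level $v=0$, and distinct members correspond to distinct levels), it shows that every $s(\bfy)$ contains \emph{exactly one} balanced word, which is exactly what is required for Lemma~\ref{th3}'s claim that the cardinality drops by precisely one after removing the balanced sequence --- a point the paper's one-line proof leaves implicit.
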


\begin{proof}
	This is an observation from the Knuth's algorithm structure; any already balanced codeword always results in another balanced one; at the worst case scenario,  the balanced state is always found by inverting all bits of an already balanced codeword.\end{proof}

Let $|s(\bfy)|$ denote the cardinality of the subset $s(\bfy)$ that comprises all associated sequences with $\bfy$. The inclusion of balanced sequences within the set of information sequence as presented in \cite{immink2010}, adds an extra rank in the rank position in every set; this is useful for cascade-based model. However, as described in Lemma \ref{th2}, a balanced sequence is always associated to another one. This important observation leads to the compression of $s(\bfy)$ by removing the already balanced sequence. This is suitable for the packet-based model as described in Fig. \ref{fig:sys}(a).

\begin{example}\label{ex2}
	Let us consider binary sequences of length $k=4$. 
	\begin{equation}\small\label{t1}
	\begin{tabular}{crcccccc}
	$\bfy$ & 0011 & 0101  & 0110 & 1001 & 1010 & 1100&\bfp \\
	\hline 
	$s(\bfy)$ & \textcircled{\tiny{1}}1011& 1101 & 1000 & 0001 & 0010 & 0000&00 \\ 
	& \textcircled{\tiny{2}}1111 &\textbf{1001}& 1110 & 0111 &\textbf{ 0110} & 0100&01 \\
	& \textcircled{\tiny{3}}\textbf{1100}&  &\textbf{1010}&\textbf{0101}&&\textbf{0011}&10
	\end{tabular} 
	\end{equation}
	$\textcircled{\tiny{1}} 1011\rightarrow 0011$\\
	$\textcircled{\tiny{2}} 1111\rightarrow 0111\rightarrow 0011$\\
	$\textcircled{\tiny{3}} 1100\rightarrow 0100\rightarrow 0000\rightarrow 0010\rightarrow 0011$
	
	Eq.~\eqref{t1} shows the encoding process described in \cite{immink2010}, whereby balanced codewords (marked in bold) are part of subsets $s(\bfy)$. Lines $\textcircled{\tiny{1}}$, $\textcircled{\tiny{2}}$ and $\textcircled{\tiny{3}}$ show how balanced codewords are obtained from the Knuth's balancing scheme. $\bfp$ represent prefixes.
	
	\begin{equation}\small\label{t2}
	\begin{tabular}{cccccccc}
	$\bfy$ & 0011 & 0101  & 0110 & 1001 & 1010 & 1100&\bfp \\
	\hline 
	$s(\bfy)$ & \textcircled{\tiny{1}}1011& 1101 & 1000 & 0001 & 0010 & 0000&0 \\ 
	& \textcircled{\tiny{2}}1111 && 1110 & 0111 & & 0100&1 \\
	\end{tabular} 
	\end{equation}
	
	Eq. \eqref{t2} presents the proposed encoding process where all subsets do not include balanced sequences.
\end{example}

The cardinality of the subset $s(\bfy)$ can be derived from RDS calculations on the balanced codeword, $\bfy$ as presented in Lemma \ref{th3}.
\begin{lemma}\label{th3}
	$|s(\bfy)|=\text{max}\{d_j(\bfy)\} - \text{min}\{d_j(\bfy)\}.$
\end{lemma}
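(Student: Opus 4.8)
The plan is to relate the cardinality of $s(\bfy)$ to the range of the running digital sum, building directly on the characterization recalled from \cite{immink2010}, namely that $|s(\bfy)| = \max\{d_j(\bfx)\} - \min\{d_j(\bfx)\} + 1$ for the \emph{uncompressed} set, where $\bfx$ is any representative sequence. The key observation is that the RDS profile of $\bfy = \bfx^e$ is rigidly tied to that of $\bfx$: inverting the first $j$ bits of $\bfx$ negates the partial sums on that initial segment, so as $j$ ranges over $0,1,\dots,k$ the multiset of achievable ``inversion points'' traces out exactly the same set of intermediate disparity values whether one starts from $\bfx$ or from the balanced word $\bfy$. First I would make this precise: show that $\{d_j(\bfy) : 0 \le j \le k\}$ and $\{d_j(\bfx) : 0 \le j \le k\}$, for $\bfx \in s(\bfy)$, have the same maximum-minus-minimum, because the two RDS walks are reflections of one another across the balancing index $e$ and both have the same endpoints ($d_0 = 0$ and $d_k = d(\bfy) = 0$).

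Next I would translate the original formula. By the cited result, the uncompressed set has size $\max\{d_j(\bfy)\} - \min\{d_j(\bfy)\} + 1$. The compression introduced in this paper (see Eq.~\eqref{t2} and the discussion following Lemma~\ref{th2}) removes exactly one element from each subset: the already-balanced sequence, which by Lemma~\ref{th2} is always associated with $\bfy$ and hence always present in the uncompressed $s(\bfy)$. Removing precisely one guaranteed element drops the cardinality by exactly $1$, yielding
\begin{equation}
|s(\bfy)| = \max\{d_j(\bfy)\} - \min\{d_j(\bfy)\} + 1 - 1 = \max\{d_j(\bfy)\} - \min\{d_j(\bfy)\},
\end{equation}
which is the claim. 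So the argument has two moving parts: (i) the RDS-range of $\bfy$ equals the RDS-range of any $\bfx \in s(\bfy)$, and (ii) compression removes exactly one sequence.

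For step (i), the cleanest route is to note that $s(\bfy)$ is precisely the set of distinct words obtainable by inverting a prefix of $\bfy$, i.e. $s(\bfy) = \{\bfy^j : 0 \le j \le k\}$ as a set (two prefix-inversions of $\bfy$ coincide iff the corresponding segment between them is balanced), so one may just as well compute $\max\{d_j\} - \min\{d_j\}$ directly on $\bfy$ itself rather than on an arbitrary representative; the count of \emph{distinct} words among $\bfy^0,\dots,\bfy^k$ is governed by how many distinct ``levels'' the walk $d_j(\bfy)$ visits, and since the walk changes by $\pm 2$ at each step it visits every integer of the correct parity between its min and its max, giving $\max\{d_j(\bfy)\} - \min\{d_j(\bfy)\} + 1$ distinct levels, hence that many distinct words (including $\bfy$ itself at level $0$). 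The hard part will be pinning down exactly why the number of \emph{distinct} prefix-inversions equals the number of distinct RDS levels rather than merely being bounded by it — one must argue that whenever the walk returns to a previously visited level, the resulting word is genuinely a repeat, and conversely that each fresh level gives a genuinely new word; this is a small parity-and-bijection argument but it is the crux, and it is essentially the content of the bound $2 \le |s(\bfy)| \le k/2+1$ quoted from \cite{immink2010}, so I would lean on that reference for the bijection and spend the new effort only on the $+1$-to-$0$ bookkeeping caused by compression.
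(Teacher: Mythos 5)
Your proof takes essentially the same route as the paper's: quote the formula $|s(\bfy)| = \max\{d_j(\bfy)\} - \min\{d_j(\bfy)\} + 1$ from \cite{immink2010} for the uncompressed set, observe that the compression removes exactly one element (the unique already-balanced member guaranteed by Lemma~\ref{th2}), and subtract one. The paper's own proof is precisely this two-line argument, so the core of your proposal is correct and matches it.

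One caution about the supporting discussion in your last paragraph: the words $\bfy^0, \bfy^1, \dots, \bfy^k$ are pairwise \emph{distinct} as binary words, since $\bfy^i$ and $\bfy^j$ with $i<j$ differ in every position from $i+1$ to $j$; hence the parenthetical claim that ``two prefix-inversions of $\bfy$ coincide iff the corresponding segment between them is balanced'' is false, and $s(\bfy)$ is not the set of distinct words among the $\bfy^j$. The correct mechanism is that $\bfy^j$ lies in $s(\bfy)$ only when $j$ is the \emph{first} index at which the walk attains the level $d_j(\bfy)$ (otherwise Knuth's rule, which inverts up to the first balancing index, sends $\bfy^j$ to a different balanced word), and this is why $|s(\bfy)|$ counts distinct RDS levels rather than distinct words. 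Since you explicitly defer to \cite{immink2010} for that count and only add the $-1$ bookkeeping yourself, this slip does not invalidate your argument, but the justification you sketch for the cited formula is not the right one.
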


\begin{proof}
	It was proved in \cite{immink2010} that $|s(\bfy)|=\mbox{max}\{d_j(\bfy)\}-\mbox{min}\{d_k(\bfy)\}+1$; the balanced codeword was removed out of every set. So, the new $|s(\bfy)|$ is subtracted by 1, leading to $|s(\bfy)|=\mbox{max}\{d_j(\bfy)\}-\mbox{min}\{d_j(\bfy)\}.$   	
\end{proof}

For any subset $s(\bfy)$, its size is always bounded as per Theorem \ref{th4}.
\begin{theorem}\label{th4}
	$1\leq |s(\bfy)|\leq \frac{k}{2}.$
\end{theorem}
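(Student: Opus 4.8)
The plan is to reduce Theorem \ref{th4} entirely to Lemma \ref{th3}, so that the proof becomes a bound on the spread of the running digital sum of a \emph{balanced} word $\bfy$ of even length $k$. By Lemma \ref{th3} we have $|s(\bfy)| = \max_j\{d_j(\bfy)\} - \min_j\{d_j(\bfy)\}$, where $j$ ranges over $1 \le j \le k$ and $d_j(\bfy)=\sum_{i=1}^j y_i$. So it suffices to show $1 \le \max_j d_j(\bfy) - \min_j d_j(\bfy) \le \tfrac{k}{2}$.

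For the lower bound, I would argue that $\bfy$ cannot have a constant-sign running sum: since $y_1 \in \{-1,1\}$ we have $d_1(\bfy) = \pm 1$, and since $d_k(\bfy)=0$ (balance) while each step changes the RDS by exactly $\pm 2$, the sequence $d_1,\dots,d_k$ must take at least two distinct values (it starts at $\pm1$ and ends at $0$, and in fact any single-valued RDS is impossible for $k\ge 2$). Hence $\max_j d_j - \min_j d_j \ge 1$; one can even note it is at least $2$ because consecutive RDS values differ by $2$, but $1$ is all that is claimed. For the upper bound, the key observation is a pairing/averaging argument: the RDS makes $k$ unit-interval steps of $\pm 2$, starting implicitly at $0$ (i.e. $d_0=0$) and returning to $0$ at step $k$. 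Let $M=\max_j d_j$ and $m=\min_j d_j$. To climb from level $0$ up to $M$ and to descend to $m$ and return to $0$, the walk must use at least $M$ ``up'' steps to first reach $M$ and, accounting for the return, at least $M$ ``down'' steps as well; similarly at least $|m|$ steps of each sign are forced by the excursion to $m$. Combining these forced steps of each type and using that the total number of steps is $k$ with equal numbers of $+2$ and $-2$ steps ($k/2$ each, by balance), one gets $M + |m| \le k/2$, i.e. $\max_j d_j - \min_j d_j \le k/2$.

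More carefully, the cleanest route for the upper bound is: the number of $+1$-bits among $y_1,\dots,y_k$ equals $k/2$ and the number of $-1$-bits equals $k/2$. Since $d_j$ increases by $1$ on a $+1$-bit (in the $\{0,1\}$ normalization, or by $2$ in the bipolar one — I will fix one convention) and decreases on a $-1$-bit, the maximum $M$ of the RDS is at most the total number of $+1$-bits that can precede it, which is at most $k/2$; and since the walk ends at $0$, the minimum $m$ satisfies $m \ge -(k/2)$ by the symmetric count of $-1$-bits. These two alone do not immediately give $M - m \le k/2$, so the actual argument must interleave them: every down-step used before the maximum is attained must later be compensated, tightening the count. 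I would make this precise by considering the position $j^\ast$ where the max is attained and the position where the min is attained, and counting signed bits on the sub-blocks $[1,j^\ast]$, $[j^\ast+1,k]$, etc.

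The main obstacle I anticipate is exactly this last point: getting the \emph{combined} bound $M - m \le k/2$ rather than the two separate (and weaker) bounds $M \le k/2$ and $-m \le k/2$. The separate bounds only yield $M - m \le k$, which is off by a factor of two. The resolution is to exploit that $\bfy$ is balanced, so an upward excursion of height $M$ ``spends'' $M$ of the $k/2$ available $+1$-bits before it can come back down, and likewise a downward excursion to depth $-m$ spends $-m$ of the $k/2$ available $-1$-bits; since these excursions occur in disjoint portions of the word (one before the extreme $j$-value of interest, one after), the $+1$-bit budget and the $-1$-bit budget are each split across the two excursions in a way that forces $M + (-m) \le k/2$. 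Once that inequality is in hand, Lemma \ref{th3} closes the proof immediately. I would also double-check the small case $k=2$ (where $\bfy \in \{01,10\}$ gives $|s(\bfy)| = 1 = k/2$) as a sanity check on both bounds being tight.
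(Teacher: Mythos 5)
Your proof is correct, but it takes a genuinely different route from the paper. The paper's proof of Theorem \ref{th4} is a one-liner: it cites the bound $2\leq |s(\bfy)|\leq \frac{k}{2}+1$ established in \cite{immink2010} and subtracts $1$ from both sides to account for the balanced codeword having been removed from each set --- exactly parallel to how the paper proves Lemma \ref{th3}. You instead give a self-contained argument: reduce to Lemma \ref{th3} and then bound the spread $M-m$ of the running digital sum of a balanced word directly, using the fact that the walk has exactly $k/2$ up-steps and $k/2$ down-steps and that the first passage to $M$ and the excursion from $m$ back to $0$ occupy disjoint segments of the word, so together they consume at least $M+(-m)$ up-steps (or down-steps, in the symmetric ordering), forcing $M-m\leq k/2$. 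This is a valid and complete argument once the two cases (max attained before min, and vice versa) are written out, and it has the merit of actually proving the underlying combinatorial fact rather than inheriting it from \cite{immink2010}; in effect you re-derive the Immink--Weber bound as a by-product. Two small corrections: your parenthetical claim that $M-m$ is ``at least $2$ because consecutive RDS values differ by $2$'' is wrong under the bipolar $\{-1,+1\}$ convention used for $d_j$ (consecutive RDS values differ by $1$; the $\pm 2$ increments in the paper refer to the disparity $d(\bfx^j)$ as $j$ increments, not to $d_j(\bfx)$), and indeed $M-m=1$ is attained at $k=2$ --- fortunately you only claim the bound $\geq 1$, which is what is needed. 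Also, your worry that the two separate bounds $M\leq k/2$ and $-m\leq k/2$ only give $M-m\leq k$ is well placed, and your disjoint-excursion resolution is the right fix; just make sure the final write-up commits to one step-size convention throughout.
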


\begin{proof}
	It was established in \cite{immink2010} that $2\leq |s(\bfy)|\leq \frac{k}{2}+1$; then after removing the balanced codeword out of every set, it follows that $1\leq |s(\bfy)|\leq \frac{k}{2}$.
\end{proof}

Therefore, the required fixed prefix length for this scheme is $\log_2\frac{k}{2}$; this is a slight improvement on the Knuth's scheme that has a redundancy of $\log_2k$ as well as on the scheme in \cite{immink2010} where it equals $\log_2(\frac{k}{2}+1)$. In addition, prefixes are obtained from ranking the information sequences associated to a balanced codeword from $0$ to $\frac{k}{2}-1$.

\section{Study of Sparseness of $|s(\bfy)|$}\label{sec5}
Let $N(\lambda, k)$ be the number of possible balanced codewords $\bfy$ of length $k$ such that $|s(\bfy)|=\lambda$. 

The following equation holds from Theorem \ref{th4}:
$$\sum_{\lambda=1}^{k/2} N(\lambda, k) = {k\choose \frac{k}{2}}.$$

For the convenience of the reader, details on computing $N(\lambda, k)$ for $1\leq \lambda \leq \frac{k}{2}$ are derived by following the guidelines as in \cite{immink2010}. 

The derivation of $N(\lambda,k)$ was done using the computation of the number of bipolar sequences whose running sum lies within two finite bounds $B1$ and $B2$ (with $B2>B1$) \cite{chien1970}. 

The interval of running sum values that a sequence may reach, also referred to as the \textit{digital sum variation (DSV)} is given by $B=B2-B1+1$. Each iteration in the random walk of a sequence defines an entry of an $B\times B$ connection matrix, $M_B$.

$M_B$ is such that, $M_B(i,j)=1$, if there is a path in the random walk from state $s_i$ to state $s_j$; and $M_B(i,j)=0$ if no path can be established. For each iteration, a random walk of the running sum can only move one state up or down.  Therefore, $M_B(i+1,i)=M_B(i,i+1)=1$ and $M_B(i,j)=0$, where $i,j=1,2,\dots,B-1$ as presented in (\ref{eqm}). 

\begin{equation}
M_B=\begin{bmatrix}
0 & 1& 0 & \dots  &0&0 \\
1 & 0 & 1 & \dots  &0& 0 \\
0 & 1 & 0 & \ddots  &0& 0 \\
\vdots & \vdots & \ddots & \ddots & \vdots &\vdots\\
0 & 0 & 0 & \dots  &0& 1 \\
0& 0 & 0 & \dots  &1& 0
\end{bmatrix}
\label{eqm}
\end{equation}
$M_B^k(i,j)$ denotes the $(i,j)^{\mbox{th}}$ entry of the $k^{\mbox{th}}$ power of $M_B$.

\begin{theorem}
	The number of balanced codewords $\bfy$ of length $k$ and $|s(\bfy)|=\lambda$, $N(\lambda, k)$ for $1\leq \lambda\leq \frac{k}{2}$, is such that
	$$ N(\lambda, k)= \sum_{i=1}^{\lambda+1}M_{\lambda+1}^k(i,i)-2\sum_{i=1}^{\lambda}M_\lambda^k(i,i)+\sum_{i=1}^{\lambda-1}M_{\lambda-1}^k(i,i).$$
\end{theorem}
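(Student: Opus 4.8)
The plan is to count balanced words $\bfy$ according to the span $s(\bfy) = \max_j d_j(\bfy) - \min_j d_j(\bfy)$ of their running digital sum, and to express this count as a difference of counts of words whose RDS is \emph{confined} to a window of a given width. First I would introduce, for a width $w$, the quantity $C(w,k)$ counting bipolar words of length $k$ that start at RDS $0$, return to RDS $0$ at step $k$ (i.e.\ are balanced), and never leave some window of $w$ consecutive RDS levels. A standard transfer-matrix argument (the walk on $B = w$ states with nearest-neighbour steps, exactly the matrix $M_B$ of \eqref{eqm}) gives, upon fixing the window to occupy levels $\{1,\dots,w\}$ with the start/end level in position $i$, that the number of such confined balanced words equals $M_w^k(i,i)$; summing over the possible positions $i$ of the zero level within the window yields $\sum_{i=1}^{w} M_w^k(i,i)$. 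The key point is that ``span exactly $\lambda$'' means the RDS fits in a window of width $\lambda+1$ (since a span of $\lambda$ touches $\lambda+1$ distinct levels) but in \emph{no} window of width $\lambda$.

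The main step is then an inclusion–exclusion / telescoping identity. Let $A(w,k) = \sum_{i=1}^{w} M_w^k(i,i)$ be the number of balanced words whose RDS fits in \emph{some} window of width $w$. A word with RDS-span exactly $\lambda$ is counted in $A(\lambda+1,k)$; I would show it is counted there with multiplicity $2$ (it fits in exactly two windows of width $\lambda+1$: one with a level of slack above, one with a level of slack below), while a word of span $\mu < \lambda$ fits in $\lambda+1-\mu$ windows of width $\lambda+1$. This multiplicity bookkeeping is the heart of the argument and the place where one must be careful: the generating identity
\[
A(\lambda+1,k) - 2A(\lambda,k) + A(\lambda-1,k)
\]
is precisely the second finite difference that kills the linear-in-$\mu$ multiplicities and isolates the span-$\lambda$ words, leaving $N(\lambda,k)$. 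Concretely, writing $A(w,k) = \sum_{\mu \le w-1} (w-\mu)\,N(\mu,k)$, the second difference in $w$ of the triangular kernel $(w-\mu)$ is the indicator $[\,\mu = w-1\,]$, so $A(\lambda+1,k) - 2A(\lambda,k)+A(\lambda-1,k) = N(\lambda,k)$.

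Substituting $A(w,k) = \sum_{i=1}^{w} M_w^k(i,i)$ into this identity gives exactly
\[
N(\lambda,k) = \sum_{i=1}^{\lambda+1} M_{\lambda+1}^k(i,i) - 2\sum_{i=1}^{\lambda} M_{\lambda}^k(i,i) + \sum_{i=1}^{\lambda-1} M_{\lambda-1}^k(i,i),
\]
which is the claim. For the boundary case $\lambda = 1$ the last sum is empty and the formula should be read accordingly; one should check that $N(1,k)$ then reduces to the number of balanced words whose partial sums only ever take two adjacent values, matching a direct count. The main obstacle I anticipate is not the matrix computation (which is routine once $M_B$ is fixed) but pinning down the multiplicity $w-\mu$ rigorously — i.e.\ proving that a word of span $\mu$ lies in exactly $w-\mu$ windows of width $w$ and that each such placement contributes exactly one diagonal term $M_w^k(i,i)$ with no double counting from the freedom in choosing the window's absolute position. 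Once that combinatorial lemma is in hand, the telescoping is immediate and the theorem follows. \qed
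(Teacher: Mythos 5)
Your argument is essentially correct, but it takes a genuinely different route from the paper. The paper's proof is a one-line reindexing: it quotes the formula of Immink and Weber for the old counts $N(\lambda',k)$, $2\le\lambda'\le\frac{k}{2}+1$, observes that removing the balanced word from each subset turns $|s(\bfy)|$ into the old cardinality minus one (so $\lambda=\lambda'-1$), and substitutes. You instead reprove the transfer-matrix identity from first principles: you introduce $A(w,k)=\sum_{i=1}^{w}M_w^k(i,i)$, interpret it as counting balanced words together with a choice of confining window of $w$ RDS levels, establish the kernel identity $A(w,k)=\sum_{\mu\le w-1}(w-\mu)N(\mu,k)$, and extract $N(\lambda,k)$ as the second finite difference in $w$. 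This buys a self-contained justification of why the diagonal trace sums appear and why the particular combination $A(\lambda+1)-2A(\lambda)+A(\lambda-1)$ isolates span exactly $\lambda$, which the paper simply inherits from \cite{immink2010}; the paper's route is shorter but rests entirely on the cited result and on Lemma \ref{th3}. One slip to fix: a word of span exactly $\lambda$ fits in exactly \emph{one} window of width $\lambda+1$ (the tight window $[\min,\max]$), not two, so it is counted in $A(\lambda+1,k)$ with multiplicity $1$, consistent with your own kernel $(w-\mu)$ evaluated at $\mu=\lambda$, $w=\lambda+1$; the parenthetical claim of multiplicity $2$ contradicts the formula you then correctly use, and the second-difference computation goes through with multiplicity $1$ (the coefficients of $N(\mu)$ for $\mu\le\lambda-1$ cancel and the coefficient of $N(\lambda)$ is $1$). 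You should also confirm, as you note, that including the starting RDS level $0$ in the window is harmless here: since $\bfy$ is balanced, $d_k(\bfy)=0$, so level $0$ is already among the visited levels and does not enlarge the span.
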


\begin{proof}
	The number of balanced codewords such that $|s(\bfy)|=\lambda'$ for $2\leq\lambda'\leq\frac{k}{2}+1$ in \cite{immink2010} was  $$ N(\lambda', k)= \sum_{i=1}^{\lambda'}M_{\lambda'}^k(i,i)-2\sum_{i=1}^{\lambda'-1}M_{\lambda'-1}^k(i,i)+\sum_{i=1}^{\lambda'-2}M_{\lambda'-2}^k(i,i).$$	
	
	Therefore, for $1\leq \lambda\leq \frac{k}{2}$, the set of all random walks between bounds $B2$ and $B1$ is shifted one unit down. This leads to $N(\lambda,k)$ balanced codewords where $\lambda=\lambda'-1$. 
	
	This leads to the following
	$$ N(\lambda, k)= \sum_{i=1}^{\lambda+1}M_{\lambda+1}^k(i,i)-2\sum_{i=1}^{\lambda}M_\lambda^k(i,i)+\sum_{i=1}^{\lambda-1}M_{\lambda-1}^k(i,i).$$
\end{proof}
A simplified expression of $M_B$ was provided in \cite{immink2010} based on a formula to compute powers of $M_B$ derived by Salkuyeh \cite{salkuyeh2006} as follows:
\begin{equation}
\sum_{i=1}^{B}M_B^k(i,i)=2^k\sum_{i=1}^{B}\cos^k\frac{\pi i}{B+1}.
\end{equation}
This makes the computation of $N(\lambda,k)$ much simpler as follows:
\begin{equation}
N(\lambda,k)=2^k\Big( \sum_{i=1}^{\lambda+1}\cos^k\frac{\pi i}{\lambda+2}-2\sum_{i=1}^{\lambda}\cos^k\frac{\pi i}{\lambda+1}+\sum_{i=1}^{\lambda-1}\cos^k\frac{\pi i}{\lambda}\Big).
\label{eqcomp}
\end{equation}

The computation of $N(\lambda,k)$ as presented in (\ref{eqcomp}) becomes obvious for special values of $\lambda$ as shown in (\ref{a2}). The enumeration of sequences corresponding to these values of $\lambda$ as well as the pseudo code for computing $|s(\bfy)|$, for generating the ordered set of information sequences and determining the prefix index were provided in \cite{immink2010}.
\begin{equation}\label{a2}
\begin{tabular}{c|l}
$\lambda$& $N(\lambda, k)$ \\ 
\hline 
1& 2\\
2& $2(2^{\frac{k}{2}-1})$\\
$\frac{k}{2}-1$& $k(k-4)$, $k>4$\\
$\frac{k}{2}$&$k$
\end{tabular}
\end{equation}

\section{Analysis and Discussions}\label{sec6}
In this section, the average number of bits denoted as $H(k)$ required to encode the prefix index of a sequence of length $k$ is computed. The number of all information sequences associated with balanced codewords is $2^k-{k\choose\frac{k}{2}}$.
\begin{equation}\label{a3}
\sum_{\lambda=1}^{k/2}\lambda N(\lambda,k) = 2^k-{k\choose \frac{k}{2}}.
\end{equation}
It follows that
\begin{equation}\label{eqh}
H(k)=\frac{\sum_{\lambda=1}^{k/2}\lambda N(\lambda,k) \log_2\lambda}{2^{k}-{k\choose \frac{k}{2}}}
\end{equation}
%The minimum redundancy for the full set of balanced codewords is given in \cite{Weber2010} by:
%\begin{equation}\label{eqh0}
%H_0(k)=k-\log_2 {k \choose \frac{k}{2}} \approx \frac{1}{2}\log_2k+0.326
%\end{equation}
The average number of bits for the construction in \cite{immink2010} is as follows:
\begin{equation}\label{eqh1}
H_1(k)=2^{-k}\sum_{\lambda=2}^{\frac{k}{2}+1}\lambda N(\lambda,k)\log_2\lambda.
\end{equation}
The average number of bits for the method in \cite{al2013} is given by
\begin{equation}\label{eqh2}
H_2(k)=\sum_{c=1}^{\frac{k}{2}}P(c) AV(c),
\end{equation}

\noindent where \\
$P(c)=2^{c+1-k} {k-1-c\choose \frac{k}{2}-c}$, $1\leq c\leq \frac{k}{2}$, $d=c-2^{\left \lfloor{\log_2c}\right \rfloor}$, 
\\and\\
$AV(c)=(c-2d).\left \lfloor{\log_2c}\right \rfloor.$$\frac{1}{2^{\left \lfloor{\log_2c}\right \rfloor}}+$ $2d.\frac{1}{2^{\left \lceil{\log_2c}\right \rceil}}.\left \lceil{\log_2c}\right \rceil$.\\

Table \ref{tb:comp} presents the comparison of the average number of bits necessary to encode the prefix from various schemes. Let $d_{H_a, H_b}$ be the difference between the average prefix length $H_a$ and $H_b$; we observed that $d_{H, H_0}\leq0.61$, $d_{H, H_1}\leq0.64$ and $d_{H_2, H}\leq1.23$. 
\begin{table}[h!]
	\centering
	\caption{comparison of the prefix' average number of bits }
	\begin{tabular}{c|c|c|c|c}
		$k$& $H_0$ & $H$ & $H_1$ & $H_2$ \\
		\hline 
		4&  1.4150&  0.8000&1.4387 & 0.5000  \\ 
		8&  1.8707&  1.4632&  1.8985& 0.9375  \\ 
		16& 2.3483&   2.0806&  2.3790&  1.3706\\ 
		32&  2.8370&  2.6629&  2.8691& 1.8082 \\ 
		64&  3.3314&  3.2207&  3.3641& 2.2516 \\ 
		128&  3.8286&  3.7615&  3.8616& 2.7039 \\ 
		256&  4.3272&  4.2902&  4.3603&  3.1647\\ 
		512&  4.8265&  4.8104&  4.8597&  3.6330\\ 
		1024&  5.3261& 5.3246&  5.3594&  4.1082\\ 
	\end{tabular} 
	\label{tb:comp}
\end{table}

Fig. \ref{fig:H's} shows the comparison between the average redundancy for balanced prefixes for $H(k)$ and $H_1(k)$, denoted as $H'(k)$ and $H_1'(k)$ respectively as well as $\log_2(k)$ and $\lceil{\log_2(k)} \rceil$. $H'(k)$ is obtained from a simple modification of $H(k)$ provided in (\ref{eqh}) as follows
\begin{equation}\label{eqh'}
H'(k)=\frac{\sum_{\lambda=1}^{k/2}\lambda N(\lambda,k) \Delta(\lambda)}{2^{k}-{k\choose \frac{k}{2}}}
\end{equation}
Similarly, $H_1'(k)$ is derived from $H_1(k)$ given in (\ref{eqh1}) as follows:
\begin{equation}\label{eqh'1}
H'_1(k)=2^{-k}\sum_{\lambda=2}^{\frac{k}{2}+1}\lambda N(\lambda,k)\Delta(\lambda).
\end{equation}
Where $\Delta(\lambda)$ corresponds to the smallest value of length $k$ such that ${k\choose \frac{k}{2}}\ge \lambda$.
\begin{figure}[h!]
	\centering
	\includegraphics[width=0.9\linewidth]{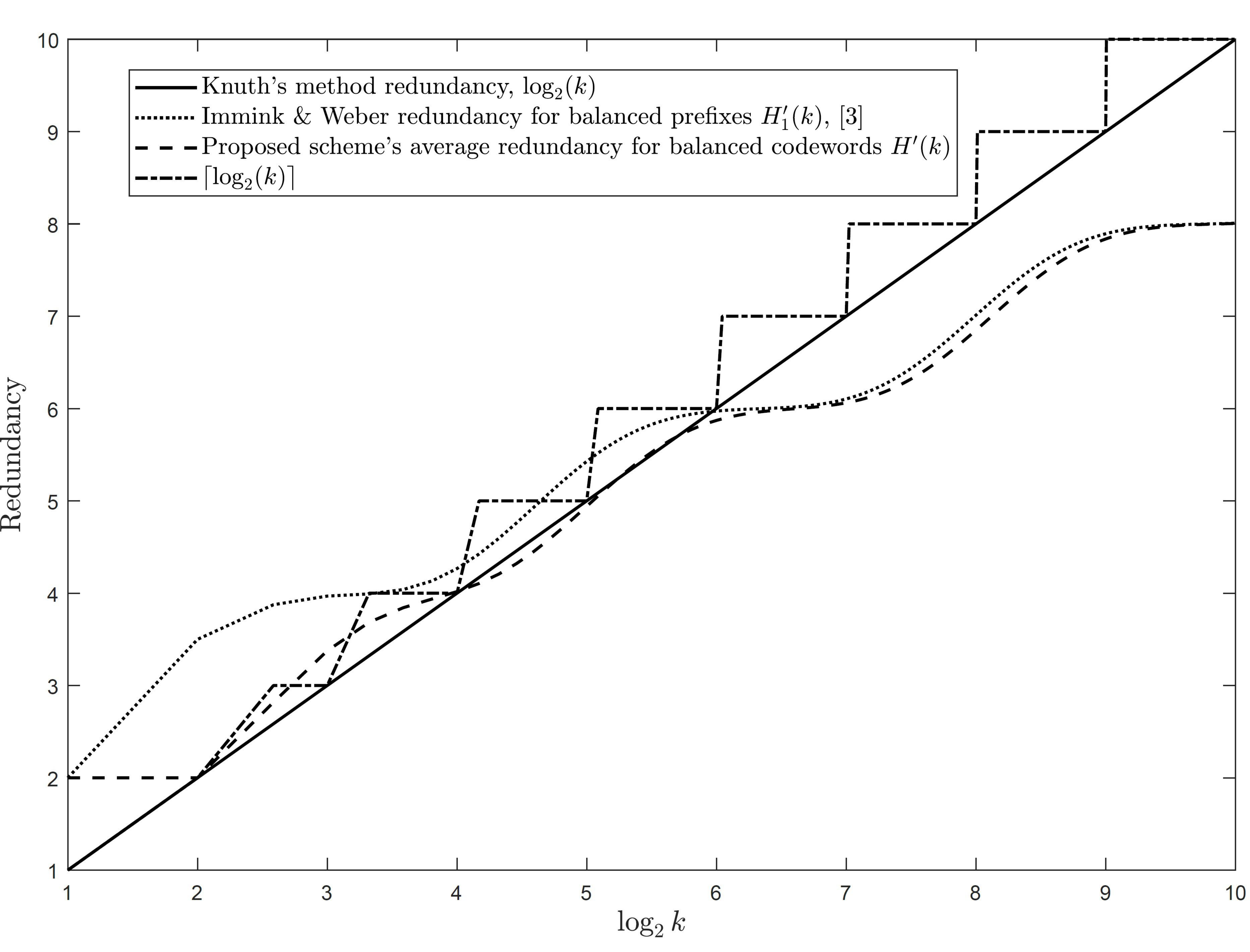}
	\caption{$H'(k)$, $H_1'(k)$, $\log_2(k)$ and $ \lceil{\log_2(k)} \rceil $ vs $\log_2k$.}
	\label{fig:H's}
\end{figure}

The graphs of $\log_2(k)$ and $\lceil{\log_2(k)} \rceil$ represents the minimum redundancy and that of integer valued redundancy of the traditional Knuth's construction. We observe that, it is only from $k>64$ that the average redundancy of the scheme presented in \cite{immink2010} is less than that of the Knuth scheme; whereas for the proposed construction, the average redundancy becomes advantageous as soon as $k>16$. Furthermore, the proposed scheme out-performs \cite{immink2010} at least for $k<1024$. 

According to Theorem \ref{th4}, the two coding schemes are applicable for the proposed scheme. For the FL prefix construction, the encoding of the prefix requires exactly $\log_2(\frac{k}{2})$ bits representing the balanced index $e$ ranging from 0 to $\frac{k}{2}-1$; whereas for the VL scheme, the prefix length varies between $0$ and $\log_2(\frac{k}{2})$ depending on the nature of the information to be encoded. A zero-prefix is used when the information sequence is already balanced. However, the VL scheme is more efficient than the FL one on the average basis.
\begin{figure}[h!]
	\centering
	\includegraphics[width=0.9\linewidth]{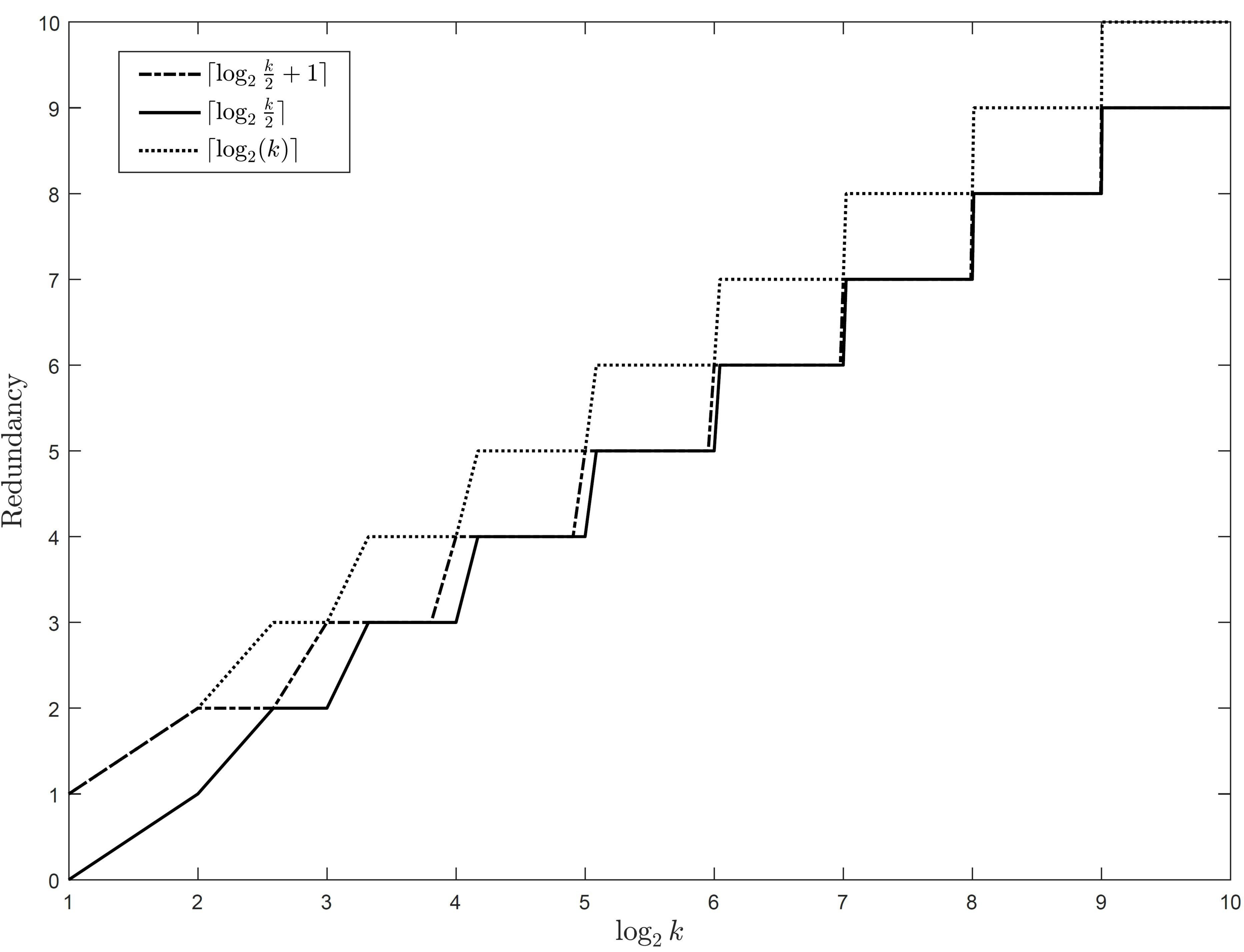}
	\caption{Rounded up FL prefix schemes}
	\label{fig:fixedCeils}
\end{figure}

In practical systems purpose, a redundancy length $r$, can only be a positive integer value. For \cite{immink2010}, $r=\lceil{\log_2(\frac{k}{2}}+1) \rceil$; in \cite{knuth1986}, $r=\lceil{\log_2(k)} \rceil$ and for the proposed construction  $r \lceil{\log_2(k)} \rceil$. Fig \ref{fig:fixedCeils} presents the rounded up FL prefix schemes. This shows that the proposed FL prefix scheme is more efficient than that of \cite{immink2010} for at least $k<512$. This improvement on short length is a great advantage for most communication systems as they make use of short packet sizes to convey information through various channels to avoid latency. 

\section{Overall Balancing}\label{sec7}
In the objective of deviating from the traditional lookup tables which memory consuming in coding, we propose a 4B6B coding based on KA. Table \ref{tab1} presents our proposed 4B6B coding which does not make use of lookup tables; red digits represent the inverted portion and bold digits are the position of balancing point index. We can notice that the table is divided into two parts: the first one consists of all inputs starting with a `0' having corresponded balanced codewords starting with a `1' and the second is all inputs starting with a `1' and having their match balanced codewords starting with a `0'.

\begin{table}[h!]
	\centering
	\caption[Proposed RLL 4B6B based on KA]{Proposed RLL 4B6B based on KA}
	\begin{tabular}{|c|c||c|c|}
		\hline 
		\textbf{Input}&\textbf{Balanced}  &  \textbf{Input}&\textbf{Balanced} \\  
		\hline 
		\textcolor{red}{00}00&  1100\textbf{10}& \textcolor{red}{0}100 & 1100\textbf{01}\\
		\hline 
		\textcolor{red}{0}001&  1001\textbf{01}&  \textcolor{red}{01}01& 1001\textbf{10}\\ 
		\hline 
		\textcolor{red}{0}010&  1010\textbf{01}&  \textcolor{red}{01}10& 1010\textbf{10} \\ 
		\hline 
		\textcolor{red}{001}1& 1101\textbf{00} & \textcolor{red}{0111} &1000\textbf{11}\\
		\hline
	\end{tabular} 
	
	\vspace*{1 cm}
	\begin{tabular}{|c|c||c|c|}
		\hline
		\textbf{Input}&\textbf{Balanced}  &  \textbf{Input}&\textbf{Balanced}  \\ 	\hline
		\textcolor{red}{1000} & 0111\textbf{00} & \textcolor{red}{110}0 & 0010\textbf{11} \\ \hline	  
		\textcolor{red}{10}01 & 0101\textbf{10} & \textcolor{red}{1}101 & 0101\textbf{01} \\ \hline	  
		\textcolor{red}{10}10 & 0110\textbf{10} & \textcolor{red}{1}110 & 0110\textbf{01} \\ \hline	  
		\textcolor{red}{1}011 & 0011\textbf{01} & \textcolor{red}{11}11 & 0011\textbf{10} \\ \hline	  
	\end{tabular}
	\label{tab1}
\end{table}

The encoder prefix consists of encoding every 4 digits into 6; prefix with length different from multiple of 4 should be filled up with `0's which leads to a FL scheme.  The input word is inverted up to a certain index which is appended at the end of the word according to the following rules which are embedded in the decoder: If the input starts with a ‘0’, the index positions, $e$ with $1 \leq e \leq 4$ are as follows: $1 \rightarrow 01$, $2 \rightarrow 10$, $3 \rightarrow 00$ and $4 \rightarrow 11$. For an input starting with a `1', the index positions are: $1 \rightarrow 01$, $2 \rightarrow 01$, $3 \rightarrow 11$ and $4 \rightarrow 00$. Therefore, an overall balancing can be achieved by encoding the prefix through the proposed 4B6B coding from Table \ref{tab1}.

\section{Conclusion}\label{sec8}

A modification of the construction given in \cite{immink2010} was proposed for packet transmission systems. The proposed scheme requires exactly $\log_2(\frac{k}{2})$ bits for the FL prefix and a prefix length between 0 and $\log_2(\frac{k}{2})$ bits for the VL scheme. The sparseness of $|s(\bfy)|$ was studied and the average efficiency of this scheme was discussed and compared to existing ones. The proposed construction is more efficient and less redundant than various schemes; it does not make use of look-up tables or enumerative coding.. Future works include a further compression of the prefix length for overall balancing through advanced efficient constructions such as \cite{mambou2016,mambou2017, mambou2017j}. On the other hand, we can extend the proposed algorithm to investigate the efficient balancing of $q$-ary sequences as higher alphabets especially powers of two's as they present various advantages in communication systems in terms of reducing latency, improving communication speed and increasing robustness and reliability \cite{ulrich1957, berrou2001, pfleschinger2001, karzand2010}.

\section*{Acknowledgement}
The authors would like to acknowledge Jos Weber for
proofreading this article and for constructive discussions. This
work has been supported partially by the Global Excellence
Stature program.

\end{document}